\newtheorem{theorem}{Theorem}
\newtheorem{definition}[theorem]{Definition}
\newtheorem{proposition}[theorem]{Proposition}
\newtheorem{lemma}[theorem]{Lemma}
\newtheorem{example}[theorem]{Example}
\begin{document}


\title{Near-resonant optical forces beyond the two-level approximation for a continuous source of spin-polarized cold atoms}
\author{T.~Vanderbruggen}
\email{thomas.vanderbruggen@icfo.es}
\affiliation{ICFO - Institut de Ciencies Fotoniques, Mediterranean Technology Park, 08860 Castelldefels (Barcelona), Spain}
\author{M.~W.~Mitchell}
\affiliation{ICFO - Institut de Ciencies Fotoniques, Mediterranean Technology Park, 08860 Castelldefels (Barcelona), Spain}
\affiliation{ICREA - Instituci{\'o} Catalana de Recerca i Estudis Avan\c cats, 08015 Barcelona, Spain}

\date{\today}

\begin{abstract}
We propose a method to generate a source of spin-polarized cold atoms which are continuously extracted and guided from a magneto-optical trap using an atom-diode effect. We show that it is possible to create a pipe-like potential by overlapping two optical beams coupled with the two transitions of a three-level system in a ladder configuration. With alkali-metal atoms, and in particular with $^{87}$Rb, a proper choice of transitions enables both the potential generation and optical pumping, thus polarizing the sample in a given Zeeman state. We extend the Dalibard and Cohen-Tannoudji dressed-atom model of radiative forces to the case of a three-level system. We derive expressions for the average force and the different sources of momentum diffusion in the resonant, non-perturbative regime. We show using numerical simulations that a significant fraction of the atoms initially loaded can be guided over several centimeters with output velocities of a few meters per second. This would produce a collimated continuous source of slow spin-polarized atoms suitable for atom interferometry.
\end{abstract}

\pacs{37.10.Gh, 37.20.+j, 32.70.Jz, 05.40.-a}
\keywords{}

\maketitle

\section{Introduction}

The achievement of laser cooling \cite{chu1998,*cohen1998,*phillips1998} resulted in an unprecedented control of atomic ensembles, which led to a dramatic evolution of atom-based sensors such as atomic clocks, inertial sensors, and magnetometers \cite{kitching2011}. These sensors are today state-of-the-art devices and their improvement is a major technological issue. 

However, the cold atom cloud as the source of the sensors needs to be reloaded after each measurement cycle. This limits the maximum repetition rate of the sensor and creates a sampling of the measured variable. It is, for example, responsible for the Dick effect that is nowadays limiting optical lattice clocks \cite{lodewyck2009}. The development of continuous cold atomic sources can thus lead to important improvements and new designs of atom interferometers. Continuous atomic sources are often associated with an atomic guide that is usually realized with a magnetic potential \cite{roos2003,falkenau2011} and loaded from a cold atomic source. Nevertheless, the presence of a magnetic field may be a limitation for interferometric applications, and particularly for magnetometers.

Here we propose an all-optical method to create a continuous source of guided cold atoms loaded directly from a magneto-optical trap (MOT), as shown in Fig.~\ref{fig:schema}~(a). It relies on an atom-diode effect \cite{thorn2008,bennerman2009} that permits the atoms to enter into the guide but forbids them to escape. The method uses optical pumping to both keep the atoms in the guided state after loading and spin-polarize the sample in a given Zeeman sublevel. The guide thus provides a collimated and continuous source of polarized cold atoms suitable for atomic sensors.

\begin{figure}[!h]
\begin{center}
\includegraphics[width=8.5cm,keepaspectratio]{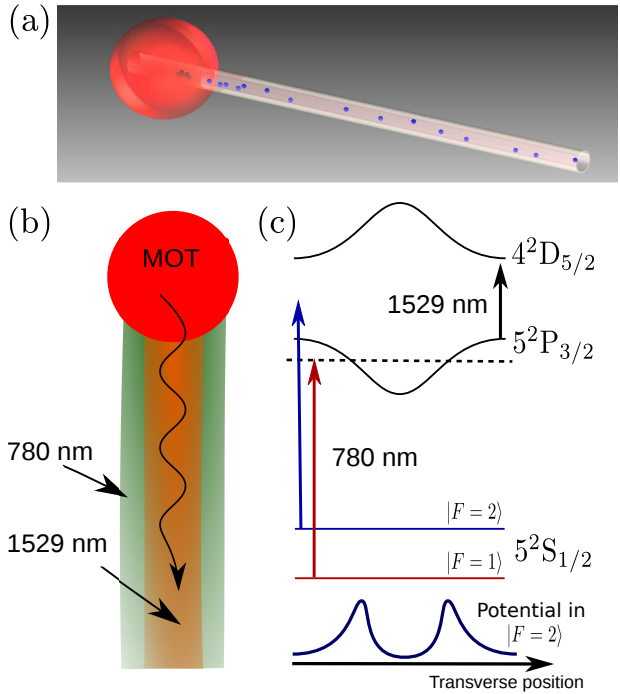}
\caption{(Color online) (a) Concept: A pipe-like potential is generated all-optically. It continuously extracts atoms from a magneto-optical trap and guides them. (b) The 780~nm and 1529~nm beams are coaxial, with the 780 nm beam being slightly larger. (c) Relevant atomic levels of $^{87}$Rb versus the transverse direction. The 1529~nm beam induces a position-dependent differential light-shift on the transition at 780~nm. The detuning of the 780~nm beam is chosen to generate a state-dependent dipole force which implements the atom diode. For an atom in $\left| F=2 \right\rangle$, the potential in the transverse direction is a double barrier, that is, a pipe, due to the cylindrical symmetry of the problem.}
\label{fig:schema}
\end{center}
\end{figure} 

The article analyses the main physical processes involved in the proposal implementation and is organized as follows. In a brief presentation of the method we explain how light-shift engineering in a three-level system leads to the continuous extraction of atoms from a MOT into an optical guide. We then model the forces for a three-level atomic system in a ladder configuration ($\Xi$-system): from the master equation in the Lindblad form, we derive the optical Bloch equations and introduce the dressed state picture which simplifies the description of the internal state dynamics to rate equations. Since the guide performance is limited by the heating rates, we quantify the various diffusion processes that an atom experiences when it propagates along the guide. We exhibit a possible implementation with $^{87}$Rb atoms, but the method can be generalized to any alkali-metal atom. We finally use the theoretical results previously obtained to simulate the stochastic trajectories of atoms in the guide solving Langevin-like equations. From these simulations we estimate the performance of the guide and show that the proposed method is viable.


\section{Principle of the method}


We now give a brief overview of how the guide is generated and continuously loaded, exhibit the various physical effects involved, and explain how they combine. 

The main idea is to coaxially overlap two beams [Fig.~\ref{fig:schema}~(b)] to engineer the internal state light-shifts and design the guiding potential. As shown in Fig.~\ref{fig:schema}~(c) for $^{87}$Rb atoms, the first beam at 780~nm is tuned to the blue of the $\left| 5^{2}\mathrm{S}_{1/2}, F=2 \right\rangle \rightarrow \left| 5^{2}\mathrm{P}_{3/2} \right\rangle$ transition, and to the red of the $\left| 5^{2}\mathrm{S}_{1/2}, F=1 \right\rangle \rightarrow \left| 5^{2}\mathrm{P}_{3/2} \right\rangle$ transition. The 780~nm field then realizes a state-dependent dipole potential: repulsive for atoms in $\left| F=2 \right\rangle$ and attractive for atoms in $\left| F=1 \right\rangle$. The second beam, placed on the red of the $5^{2}\mathrm{P}_{3/2} \rightarrow 4^{2}\mathrm{D}_{5/2}$ transition at 1529~nm, induces a differential light-shift on the transition at 780~nm \cite{brantut2008,bertoldi2010}. Therefore, the 780~nm radiation is far-detuned at the center of the beams, resulting in a strong reduction of the dipole potential. As a consequence, the potential is created only at the periphery of the beam, producing a state-selective barrier with a pipe shape due to the cylindrical symmetry of the problem. Moreover, the differential light-shift also strongly lowers the spontaneous emission rate inside the guide, reducing the heating rate and thus increasing the guided distance of the atoms. 


Due to the differential light-shift, the 780~nm light is resonant (or nearly resonant) with atoms in $\left| F=1 \right\rangle$ at some position inside the guide. Therefore, atoms in $\left| F=1 \right\rangle$ from the MOT can pass through the guide barrier before being repumped into the barrier-sensitive state $\left| F=2 \right\rangle$ by the 780~nm light. In other words, an atom can enter the guide but cannot leave it: this is the atom-diode effect which continuously extracts atoms from the MOT and fills the guide. We will see in Sec.~\ref{sec:doubly_closed_trans} that a proper choice of polarizations for the optical beams drives closed transitions that maintain the atoms in the guided state $\left| F=2 \right\rangle$ while optically pumping and polarizing the atomic sample.

For the atom-diode effect to occur the 780~nm radiation is not far-detuned from the $5^{2}\mathrm{S}_{1/2} \rightarrow 5^{2}\mathrm{P}_{3/2}$ transition and spontaneous emission will play an important role in the dynamics of the system. The scattering induces a heating of the sample that limits the guided distance for a given transport velocity. Moreover, the occupation of the excited states is not negligible near the barrier so that the dipole force of the 1529~nm light contributes to the overall potential. These effects may modify the na{\"i}ve description introduced above. Therefore, to verify whether the method is viable and what performance to expect, we developed a comprehensive model of the problem, expanding the Dalibard and Cohen-Tannoudji dressed-atom approach \citep{dalibard1985}. The following sections present a detailed study of the optical forces and diffusion coefficients for a three-level atomic system driven near resonance.

\section{Dipole forces in a three-level atomic system}

We estimate the dipole force and the related momentum dispersion coefficients in a $\Xi$-system (Fig.~\ref{fig:three_level_syst}). To solve this problem, we derive the optical Bloch equations for such a configuration and introduce the dressed-atom picture, providing a simplified description of the system when the secular approximation is valid.

\begin{figure}[!h]
\begin{center}
\includegraphics[width=3.5cm,keepaspectratio]{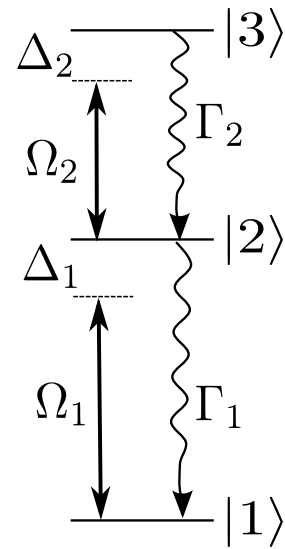}
\caption{Three-level system in a $\Xi$ configuration. The system is doubly-driven with electromagnetic fields: the $\left| 1 \right\rangle \leftrightarrow \left| 2 \right\rangle$ transition at $\omega_{0}^{(1)}$ is driven with a detuning $\Delta_{1} = \omega_{0}^{(1)} - \omega_{L}^{(1)}$ (note that with this sign convention $\Delta < 0$ corresponds to blue detuned light) and a Rabi frequency $\Omega_{1}$, whereas the $\left| 2 \right\rangle \leftrightarrow \left| 3 \right\rangle$ transition at $\omega_{0}^{(2)}$ is driven with a detuning $\Delta_{2} = \omega_{0}^{(2)} - \omega_{L}^{(2)}$ and a Rabi frequency $\Omega_{2}$. An atom in $\left| 2 \right\rangle$ can decay to $\left| 1 \right\rangle$ at a rate $\Gamma_{1}$ and an atom in $\left| 3 \right\rangle$ can decay to $\left| 2 \right\rangle$ at a rate $\Gamma_{2}$. We assume there is no decay from $\left| 3 \right\rangle$ to $\left| 1 \right\rangle$.}
\label{fig:three_level_syst}
\end{center}
\end{figure}

\subsection{Optical Bloch equations}

According to the notations in Fig.~\ref{fig:three_level_syst}, the Hamiltonian of the atom in the rotating frame is, in the $\left\{ \left| 1 \right\rangle,\left| 2 \right\rangle,\left| 3 \right\rangle \right\}$ basis,
\begin{equation}
H_{\rm at} = \hbar \left( 
\begin{array}{ccc}
0 & 0 & 0 \\
0 & \Delta_{1} & 0 \\
0 & 0 & \Delta_{2}
\end{array}
\right).
\end{equation}
By introducing the jump operators $\sigma_{12} = \left| 1 \right\rangle \left\langle 2 \right|$ and $\sigma_{23} = \left| 2 \right\rangle \left\langle 3 \right|$, the atom-light interaction can be modeled with the following Hamiltonian:
\begin{equation}
H_{\rm int} = \frac{\hbar \Omega_{1}}{2} \left( \sigma_{12} + \sigma_{12}^{\dagger} \right) + \frac{\hbar \Omega_{2}}{2} \left( \sigma_{23} + \sigma_{23}^{\dagger} \right),
\label{eq:ham_interaction}
\end{equation}
where the optical phases are assumed to be constant in time so that the Rabi frequencies $\Omega_{1}$ and $\Omega_{2}$ are real numbers. The Hamiltonian that describes the internal energy of an atom coupled to the two optical fields is thus $V = H_{\rm at} + H_{\rm int}$.

The internal state evolution of the system is governed by the following master equation:
\begin{equation}
\partial_{t} \rho = - \frac{i}{\hbar} \left[ V,\rho \right] + \Gamma_{1} \mathcal{L} \left[ \sigma_{12} \right] \rho + \Gamma_{2} \mathcal{L} \left[ \sigma_{23} \right] \rho,
\label{eq:master_eq}
\end{equation}
where the Lindblad superoperators that model the spontaneous emission processes are
\begin{equation}
\mathcal{L} \left[ \sigma \right] \rho = \sigma \rho \sigma^{\dagger} - \frac{1}{2} \left\{ \sigma \sigma^{\dagger} , \rho\right\},
\end{equation}
and $\left\{ A,B \right\} = AB+BA$ is the anticommutator. Using this master equation, the optical Bloch equations for a $\Xi$-system are derived in App.~\ref{app:opt_bloch_eqs}.

In the end, we are interested in the external motion of the atom which occurs at timescales much larger than the internal state evolution. We thus assume that the internal atomic state is always in the steady state $\rho_{\infty} \equiv \lim_{t \rightarrow \infty} \rho$. The steady state satisfies $\partial_{t} \rho_{\infty} = 0$ and its value is derived in App.~\ref{app:steady_rho}.

\subsection{Dressed-atom picture}

The dressed states are defined as the eigenstates of the Hamiltonian $V$, which admits the following eigendecomposition:
\begin{equation}
V = U^{\dagger} \widetilde{V} U,
\end{equation}
where $\widetilde{V} = \mathrm{diag} ( \widetilde{E}_{\alpha}, \widetilde{E}_{\beta}, \widetilde{E}_{\gamma} )$ is diagonal in the basis $\left\{ \left| \alpha \right\rangle, \left| \beta \right\rangle, \left| \gamma \right\rangle \right\}$ called the \textit{dressed states basis}. $U$ is the unitary transformation between the bare and dressed states basis $\left\{ \left| 1 \right\rangle, \left| 2 \right\rangle, \left| 3 \right\rangle \right\}\leftrightarrow\left\{ \left| \alpha \right\rangle, \left| \beta \right\rangle, \left| \gamma \right\rangle \right\}$ \cite{whitley1976}. 


A knowledge of the unitary transformation $U$ allows us to define the jump operators for the dressed states as $\widetilde{\sigma}^{\dagger} = U \sigma^{\dagger} U^{\dagger}$. Similarly, the density matrix in the dressed basis can be obtained from the solution $\rho$ of the Bloch equations in the bare basis: $\widetilde{\rho}= U \rho U^{\dagger}$.

The main advantage of the dressed-atom description is that when the system is strongly driven, the secular approximation applies and the equation of motion for the internal degrees of freedom reduce to rate equations, as shown in App.~\ref{app:trans_prob}. The decay rate between the dressed states $\left| k \right\rangle$ and $\left| k' \right\rangle$ is
\begin{equation}
\Gamma_{k \rightarrow k'} = \Gamma_{1} \left| \left( \widetilde{\sigma}_{12} \right)_{kk'} \right|^{2} + \Gamma_{2} \left| \left( \widetilde{\sigma}_{23} \right)_{kk'} \right|^{2}.
\end{equation}
The rate equations are
\begin{equation}
\partial_{t} \widetilde{\boldsymbol\rho} + \boldsymbol\Gamma \widetilde{\boldsymbol\rho} = \mathbf{0},
\label{eq:rate_eqns}
\end{equation}
where $\widetilde{\boldsymbol\rho} = \left( \widetilde{\rho}_{\alpha \alpha}, \widetilde{\rho}_{\beta \beta}, \widetilde{\rho}_{\gamma \gamma} \right)^{T}$ is the vector of the dressed state populations and the rate matrix is
\begin{equation}
\boldsymbol\Gamma = \left( 
\begin{smallmatrix}
\Gamma_{\alpha \rightarrow \beta} + \Gamma_{\alpha \rightarrow \gamma} & -\Gamma_{\alpha \rightarrow \beta} & -\Gamma_{\alpha \rightarrow \gamma} \\
-\Gamma_{\beta \rightarrow \alpha} & \Gamma_{\beta \rightarrow \alpha}+\Gamma_{\beta \rightarrow \gamma} & -\Gamma_{\beta \rightarrow \gamma} \\
-\Gamma_{\gamma \rightarrow \alpha} & -\Gamma_{\gamma \rightarrow \beta} & \Gamma_{\gamma \rightarrow \alpha}+\Gamma_{\gamma \rightarrow \beta}
\end{smallmatrix}
\right).
\label{eq:Gamma_mat}
\end{equation}
The stationary populations $\widetilde{\boldsymbol\rho}^{\infty}$ are obtained by solving the system $\boldsymbol\Gamma \widetilde{\boldsymbol\rho}^{\infty} = \mathbf{0}$ with $\left\| \widetilde{\boldsymbol\rho}^{\infty} \right\|_{1} = 1$.

\subsection{Dipole force}

If the Rabi frequencies $\Omega_{1}$ and $\Omega_{2}$ are functions of the spatial position coordinates $\mathbf{r}$, as expected for Gaussian optical beams, then the overall Hamiltonian of the atom, including both the internal and external degrees of freedom, is
\begin{equation}
H = \frac{\mathbf{p}^{2}}{2m} + V (\mathbf{r}),
\end{equation}
where $\mathbf{p}$ is the atomic momentum operator and $m$ is the mass of a single atom.

The dipole force operator can be defined from the Heisenberg equation of motion for the momentum operator \cite{dalibard1985}:
\begin{equation}
\mathbf{F} = \frac{d}{dt} \mathbf{p} = \frac{i}{\hbar} \left[ H,\mathbf{p} \right] = - \nabla V (\mathbf{r}).
\end{equation}
Over a time interval long compared to the lifetimes $1/\Gamma_{1}$ and $1/\Gamma_{2}$, the mean force experienced by an atom at position $\mathbf{r}$ is obtained from the average for the steady internal state,
\begin{equation}
\left\langle \mathbf{F} (\mathbf{r}) \right\rangle = \mathrm{Tr} \left[ \mathbf{F} (\mathbf{r}) \rho_{\infty} (\mathbf{r}) \right] = - \mathrm{Tr} \left[ \left( \nabla V (\mathbf{r}) \right) \rho_{\infty} (\mathbf{r}) \right].
\label{eq:opt_potential}
\end{equation}
Note that $\left\langle \mathbf{F} (\mathbf{r}) \right\rangle \neq - \nabla \left\langle V (\mathbf{r}) \right\rangle$ since the density operator $\rho_{\infty}$ depends on the position $\mathbf{r}$.

Since the trace of an operator is independent of the basis choice, the mean dipole force can be written using the steady state density matrix in the dressed states basis: $\left\langle \mathbf{F} \right\rangle = \mathrm{Tr} \left( \mathbf{F} \rho_{\infty} \right) = \mathrm{Tr} \left( U \mathbf{F} U^{\dagger} U \rho_{\infty} U^{\dagger} \right) = \mathrm{Tr} ( \widetilde{\mathbf{F}} \widetilde{\rho}_{\infty} ) = \langle \widetilde{\mathbf{F}}\rangle$. In that case, a dipole force can be associated with each dressed state,
\begin{equation}
\left\langle \widetilde{\mathbf{F}} \right\rangle = \left\langle \widetilde{\mathbf{F}}_{\alpha} \right\rangle + \left\langle \widetilde{\mathbf{F}}_{\beta} \right\rangle + \left\langle \widetilde{\mathbf{F}}_{\gamma} \right\rangle,
\label{eq:mean_dip_force}
\end{equation}
where $\left\langle \widetilde{\mathbf{F}}_{k} \right\rangle = - \widetilde{\rho}^{\infty}_{kk} \nabla \widetilde{E}_{k}$. The potential can then be obtained by integrating the force.

\section{\label{sec:heating_rates}Heating rates}

As we will see in Sec.~\ref{sec:possible_config}, the doubly driven $\Xi$-system can create a double barrier potential in exchange for a non-negligible spontaneous emission rate near the barrier position. Those spontaneous emission events are a source of heating. It is thus necessary to quantify the various heating processes to estimate whether the proposed method is viable.

The stochastic fluctuations of the atomic momentum result in a random walk of the trajectory in momentum space that increases the kinetic energy of an atom in the guide. This phenomena limits the atomic lifetime and thus the guided distance for a given mean velocity of the atoms. Two stochastic processes are responsible for this effect:
\begin{itemize}
\item the radiation pressure due to the discrete nature of the spontaneous scattering events and the random direction in which the photon is emitted;
\item the dipole force fluctuation due to the random variation of the internal atomic state \cite{gordon1980,dalibard1985}.
\end{itemize}

The velocity dispersion induced by a given process is characterized by a diffusion coefficient $D$. It is defined so that an infinitesimal variation of the velocity during a time interval $dt$ satisfies
\begin{equation}
d\mathbf{v} = \sqrt{D} d\mathbf{W}_{t},
\label{eq:wiener_inc_vel}
\end{equation}
where $d\mathbf{W}_{t} = \boldsymbol{\zeta} \sqrt{dt}$ is the increment of a Wiener noise process, and $\boldsymbol{\zeta}$ is a random vector with a normally distributed norm of unitary variance ($\left| \boldsymbol{\zeta} \right| \in \mathcal{N}(0,1)$) and a random direction.

It must be noted that all the heating processes result from the same stochastic source: the spontaneous emission. As a consequence, the processes are not only correlated but synchronized since the change of the recoil and dipolar forces occurs simultaneously. This effect may modify the resulting diffusion coefficient compared to the unsynchronized case where the two processes are uncorrelated. Nevertheless, in the following we assume that the processes are independent and do not consider any synchronization effect.

We will now determine the diffusion coefficients associated with both the radiation pressure and the dipole force fluctuation.

\subsection{Radiation pressure fluctuation}

The diffusion coefficient associated with the radiation pressure at the wavelength $\lambda_{k}$ ($k=1,2$) is
\begin{equation}
D_{k} (\mathbf{r}) = \left( \frac{\hbar k_{R}}{m} \right)^{2} \Gamma_{S}^{(k)} (\mathbf{r}),
\end{equation}
where $k_{R}$ is the single photon recoil momentum at $\lambda_{k}$, $m$ is the atomic mass, and the scattering rate is proportional to the decay rate and to the population in the bare state $\left| k \right\rangle$: $\Gamma_{S}^{(k)}(\mathbf{r}) = \Gamma_{k} \rho^{\infty}_{kk}(\mathbf{r})$. 

If the diffusion coefficient in a given spatial direction is desired, a prefactor must be applied depending on the polarization of the considered transition. For example, for the isotropic diffusion due to $s$-wave scattering the prefactor is $1/3$ for all directions, whereas for a $\sigma^{+/-}$ transition a factor of $1/4$ must be applied for the two directions orthogonal to the dipole axis and a factor of $1/2$ for the last direction.

\subsection{\label{sec:dip_force_fluc}Dipole force fluctuation}

Since the internal atomic state changes at random times, the dipole force fluctuates around its mean value [Eq.~(\ref{eq:mean_dip_force})]. From the rate equations of the dressed states populations [Eq.~(\ref{eq:rate_eqns})], we see that the stochastic evolution of the dipole force $\{ \widetilde{\mathbf{F}}_{t}, t \geq 0 \}$ is a continuous time Markovian process with infinitesimal generator $\mathbf{\Gamma}$ given by Eq.~(\ref{eq:Gamma_mat}).

The diffusion coefficient is given by Ref.~\cite{dalibard1985}:
\begin{equation}
D_{\rm dip} = \lim_{t \rightarrow \infty} \frac{1}{m^{2}} \int_{0}^{\infty} \left( \left\langle \widetilde{\mathbf{F}}_{t} \widetilde{\mathbf{F}}_{t+\tau} \right\rangle - \left\langle \widetilde{\mathbf{F}}_{t} \right\rangle^{2} \right) \; d\tau.
\label{eq:diff_coeff_def}
\end{equation}
Since the process is Markovian, we show in App.~\ref{app:dip_force_fluc} that the diffusion coefficient along a direction $\varepsilon \in \{ x,y,z \}$ satisfies the following relation:
\begin{equation}
D_{\rm dip}^{(\varepsilon)} = \frac{1}{m^{2}} \left\langle \widetilde{\mathbf{F}}^{(\varepsilon)}, \boldsymbol\Gamma^{\sharp} \widetilde{\mathbf{F}}^{(\varepsilon)} \right\rangle,
\end{equation}
where $\widetilde{\mathbf{F}}^{(\varepsilon)} = ( \widetilde{\mathbf{F}}^{(\varepsilon)}_{\alpha},\widetilde{\mathbf{F}}^{(\varepsilon)}_{\beta},\widetilde{\mathbf{F}}^{(\varepsilon)}_{\gamma} )^{T}$ is the vector of the forces in the different dressed states with $\widetilde{\mathbf{F}}^{(\varepsilon)}_{k} = - (\nabla \widetilde{E}_{k} )_{\varepsilon}$, $\mathbf{\Gamma}^{\sharp}$ is the group inverse of $\mathbf{\Gamma}$, and the scalar product is defined as $\left\langle \mathbf{x}, \mathbf{y} \right\rangle = \sum_{i} \widetilde{\rho}^{\infty}_{ii} x_{i} y_{i}$. This formula allows us to calculate the diffusion coefficient of any system described by a set of rate equations (Markovian process). Moreover, it provides an efficient way to perform a numerical estimation.

\section{Implementation with Rubidium 87 atoms}

We now present how to implement the guided continuous source with $^{87}$Rb atoms. We first show how an effective $\Xi$-system can be realized within the complicated structure of $^{87}$Rb transitions using a proper choice of polarizations for the optical fields. We then use the previous results to determine the parameters for a suitable configuration and analyze the atom loading process into the guide based on the atom-diode effect. We also explain how the transport velocity along the guide can be controlled using counter-propagating beams. Finally, we simulate atom trajectories along the guide using stochastic Langevin-like differential equations to quantify the expected performance in terms of guided efficiency along a given distance and output velocity.

\subsection{\label{sec:doubly_closed_trans}Doubly-closed transitions}

A closed $\Xi$-system within the $^{87}$Rb structure is realized using the $5^{2} \mathrm{S}_{1/2} \rightarrow 5^{2} \mathrm{P}_{3/2}$ transition at 780.24~nm and the $5^{2} \mathrm{P}_{3/2} \rightarrow 4^{2} \mathrm{D}_{5/2}$ transition at 1529.37~nm \cite{lee2007}. Therefore, it realizes a very good approximation of the theoretical model developed above. Note that this idea can be generalized to other alkali-metal atoms since they have analogous structures. Furthermore, the choice of the 1529~nm wavelength is appealing since laser technologies developed for optical telecommunications can be used.

\begin{figure}[!h]
\begin{center}
\includegraphics[width=8.5cm,keepaspectratio]{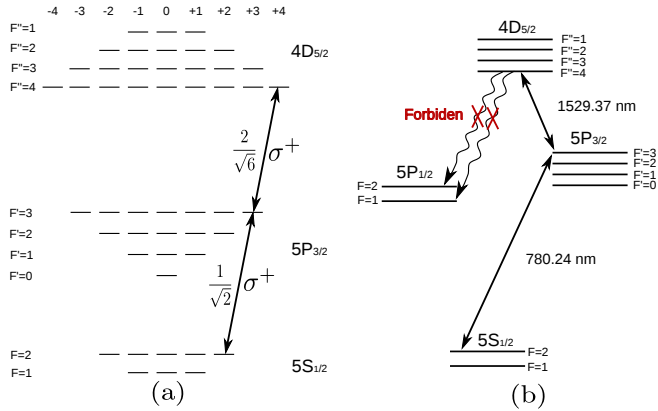}
\caption{(Color online) Cycling transitions in $^{87}$Rb. (a) The use of $\sigma^{+}$ polarized light allows us to drive a doubly closed transition while providing optical pumping. (b) The decay from the $4^{2} \mathrm{D}_{5/2}$ state to the $5^{2} \mathrm{P}_{1/2}$ state is forbidden.}
\label{fig:cycling_trans}
\end{center}
\end{figure} 

The use of $\sigma^{+}$ polarized electromagnetic fields both optically pumps the sample and drives closed transitions between the Zeeman substates $\left|F=2,m_{F}=2 \right\rangle \rightarrow \left|F'=3,m'_{F}=3 \right\rangle$ and $\left| F'=3,m'_{F}=3 \right\rangle \rightarrow \left|F''=4,m''_{F}=4 \right\rangle$, as depicted in Fig.~\ref{fig:cycling_trans}~(a). Note also that, with this choice of hyperfine states, a decay into the $5^{2} \mathrm{P}_{1/2}$ state is forbidden ($\Delta F > 1$), as shown in Fig.~\ref{fig:cycling_trans}~(b). Such a configuration is thus equivalent to the $\Xi$ system sought for. Moreover, since an atom in $\left| F=2 \right\rangle$ loaded inside the guide cannot decay into $\left| F=1 \right\rangle$ after a spontaneous emission event the atom will remain in the guided state, thus avoiding its loss during propagation. 

\begin{figure}[!h]
\begin{center}
\includegraphics[width=8.5cm,keepaspectratio]{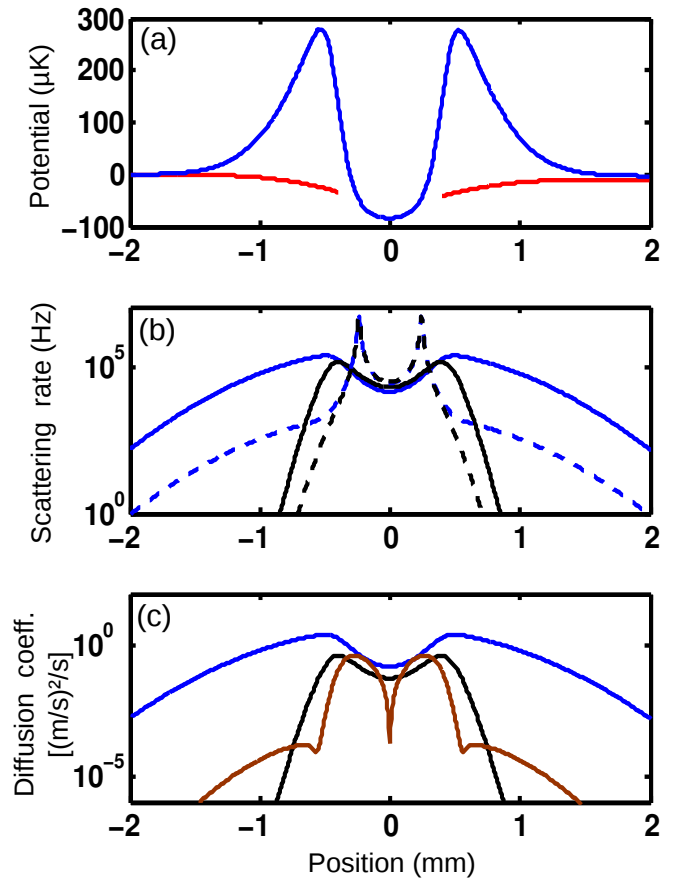}
\caption{(Color online) Dipole trap properties vs transverse position at the waist of the beams. (a) Dipole potential for an atom in $\left|F=2 \right\rangle$ (blue) and $\left|F=1 \right\rangle$ (red). The potential in $\left|F=1 \right\rangle$ is represented only in a region where the scattering rate is low since $\left|F=1 \right\rangle$ is not a stationary state of the system. (b) Scattering rates at 780~nm (blue) and 1529~nm (black) for an atom in $\left|F=2 \right\rangle$ (solid line) and $\left|F=1 \right\rangle$ (dashed line). (c) Diffusion coefficient for an atom in $\left|F=2 \right\rangle$ due to the recoil at 780~nm (blue), at 1529~nm (black) and to the fluctuations of the dipole force (brown).}
\label{fig:internal_state_param}
\end{center}
\end{figure}

\subsection{\label{sec:possible_config}A possible configuration}

We analyze the properties of the guide obtained in the following specific configuration. The 780 nm beam has a waist of 1~mm and an intensity of $1000$~mW/cm$^{2}$ (the optical power is thus 16~mW), blue detuned by $\Delta_{1} = - 2 \pi \times 485$~MHz from the $\left| F=2 \right\rangle \rightarrow \left| F'=3 \right\rangle$ transition and $\Gamma_{1} = 2 \pi \times 6.065$~MHz \cite{steck}.

The beam at 1529 nm has a waist of 300~$\mu$m and thus a Rayleigh range of 18~cm. Its intensity is $3 \times 10^{5}$~mW/cm$^{2}$ (a power of 420~mW), and the detuning is $\Delta_{2} = + 2 \pi \times 133$~MHz to the red of the $\left| F'=3 \right\rangle \rightarrow \left| F''=4 \right\rangle$ transition. The transition linewidth is $\Gamma_{2} = 2 \pi \times 2$~MHz \cite{heavens1961,moon2007}.

From these parameters, we determine the steady state density matrix versus spatial position using the results in App.~\ref{app:steady_rho}, and estimate the dipole potential after integrating the dipole force [Eq.~(\ref{eq:mean_dip_force})]. The results are presented in Fig.~\ref{fig:internal_state_param}. 


\subsection{Loading of the guide}

We see in Fig.~\ref{fig:internal_state_param}~(a) that, for an atom in $\left| F=2 \right\rangle$, the potential has a double barrier shape, creating a pipe-like potential due to the cylindrical symmetry of the problem. The guide depth is 365~$\mu$K. Conversely, for an atom in $\left| F=1 \right\rangle$, the potential is flat at the barrier position. The proposed configuration thus implements the state-dependent potential necessary to obtain the atom-diode effect.

Nevertheless, for the atom-diode effect to occur an atom in $\left| F=1 \right\rangle$ must first pass through the barrier and then be repumped into $\left| F=2 \right\rangle$, after passing the barrier maximum. In other words, an atom in $\left| F=1 \right\rangle$ should not scatter more than a few photons between the moment when it enters the 780~nm beam and the moment when it reaches the position of the barrier maximum. From Fig.~\ref{fig:internal_state_param}~(b), we see that the barrier width is about 0.1~mm wide and the scattering rate is below 1~kHz before the barrier maximum. As a consequence, an atom with a velocity of $\sim$~10~cm/s scatters less than one photon while crossing the barrier, resulting in a low probability for the atom to be repumped. Conversely, the scattering rate of an atom in $\left| F=1 \right\rangle$ strongly increases at the guide center and the atom will be pumped into the trap sensitive state, thus realizing the atom-diode based loading. This effect is studied in details using a numerical simulation in Sec.~\ref{sec:sim_guide_loading}.


\subsection{Heating rates}

By creating a differential light-shift, the 1529~nm radiation not only creates the guide barriers but also reduces the scattering rates, and thus the diffusion coefficients due to photon recoil, at 780~nm and 1529~nm by about two orders of magnitude [Fig.~\ref{fig:internal_state_param}~(c)]. This effect strongly lowers the heating rate of the guided sample. 

Nevertheless, the heating rate due to the recoils cannot be made arbitrarily small. The reason is that the diffusion due to the dipole force fluctuation increases with the intensity at 1529~nm because of the enhanced light-shift. A compromise must thus be found between the spontaneous emission rate and the dipole force diffusion. In the present configuration, the parameters are adjusted so that the diffusion due to the recoil is of the same order as the one induced by the dipole force fluctuation.

\subsection{\label{sec:transport}Transport along the guide}

To transport the atoms, an extra force must be applied along the guide axis. By counterpropagating the beams at 780~nm and at 1529~nm, the scattering imbalance between the two beams pushes the atoms in a given direction with a force
\begin{equation}
\mathbf{F}_{\rm push} ( \mathbf{r}) = \hbar \partial_{t} \mathbf{k} = \hbar \left( \Gamma_{S}^{(1)} ( \mathbf{r}) k_{R}^{(1)} - \Gamma_{S}^{(2)} ( \mathbf{r}) k_{R}^{(2)} \right) \mathbf{u}_{z},
\end{equation}
where $k_{R}^{(k)}$ is the single-photon recoil impulsion at the wavelength $\lambda_{k}$ and $\mathbf{u}_{z}$ is the unit vector in the $z$ direction. The spatial distribution of the pushing force is depicted in Fig.~\ref{fig:push_force}. The velocity of an atom inside the guide is almost constant whereas the atom accelerates near the barrier.

\begin{figure}[!h]
\begin{center}
\includegraphics[width=7.5cm,keepaspectratio]{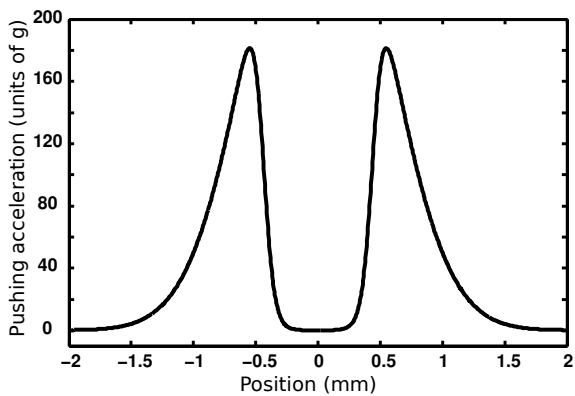}
\caption{Pushing acceleration vs the transverse position resulting from the counterpropagation of the 780 and 1529~nm beams ($g=9.81$~m/s$^{2}$).}
\label{fig:push_force}
\end{center}
\end{figure}


\subsection{Numerical simulation}

\begin{figure*}
\includegraphics[width=18cm,keepaspectratio]{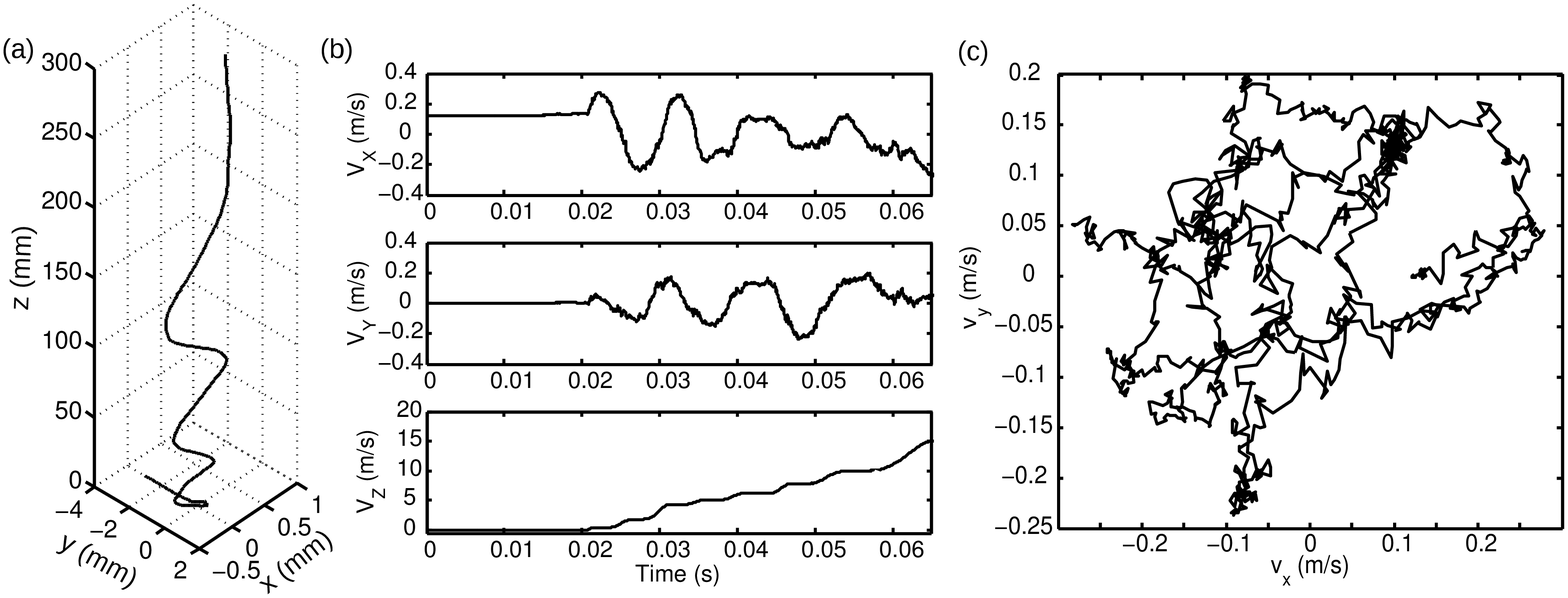}
\caption{Trajectory of an atom along the guide obtained by numerical integration of the system of stochastic differential equations [Eqs.~(\ref{eq:sde_speed})]. The simulation includes the loading of the atom inside the guide. The atom is initially in $|F=1 \rangle$ and placed at 1.5 mm from the guide axis. It is launched towards the guide with a velocity distributed according to a Maxwell-Boltzmann law with temperature $T_{0} = 50 \; \mu$K. (a) Trajectory along the guide. The atom is reflected off the guide barrier until it escapes the guide when the transverse kinetic energy is higher than the height of the barrier. (b) Evolution of the velocities along the three directions. When a reflection occurs, the transverse velocities exhibit sudden sign changes and the longitudinal velocity increases due to the higher spontaneous emission rate near the barrier. (c) The Brownian motion of the transverse velocities responsible for the heating of the sample by increasing the radial kinetic energy.}
\label{fig:trajectory}
\end{figure*}

Since the heating processes limit the lifetime of the guided sample, a compromise must be found between the transport velocity and the guided length: the atomic sample at the guide output should be cold -- meaning that the output velocity and its dispersion should be of the order of a few meters per second -- while the sample should be guided along a few centimeters. This distance is sufficient, for instance, to deliver atoms inside the magnetic shield required for any kind of high-performance atom interferometer, particularly since the MOT magnetic field is continuously operated.

To validate that the proposed configuration satisfies these requirements we numerically simulate the trajectories of atoms along the guide, solving a set of Langevin equations. The statistical analysis of the behavior over a large number of trajectories allows us to estimate the expected performances of the system in terms of guided distance and output velocity distribution. Moreover, the loading process is included in the simulation to consider its influence on the guided sample.

\subsubsection{Langevin equations}

The simulation relies on the fact that the internal degrees of freedom of an atom evolve on a much shorter time-scale (limited by the lifetime of the excited states to submicroseconds) than the external ones ($\sim$~ms). This allows us to employ a semi-classical description of the problem, describing the internal state evolution according to quantum mechanics while modeling the external dynamics with Langevin-like equations (Newton equations with stochastic velocity increments). 

The momentum diffusion of the atoms generates a random walk of their velocities in the transverse direction, inducing stochastic heating. This random walk can be modelled from the diffusion coefficients derived in Sec.~\ref{sec:heating_rates} and with a Wiener increment [Eq.~(\ref{eq:wiener_inc_vel})] in the differential equation of motion. Explicitly, the velocity vector $\mathbf{v} = \left( v_{x}, v_{y}, v_{z} \right)$ evolution is governed by the following system of stochastic differential equations:
\begin{subequations}
\label{eq:sde_speed}
\begin{eqnarray}
d v_{x}(t) & = & \frac{1}{m} F_{x} \left( \mathbf{r} (t) \right) dt + \sqrt{D_{1}\left( \mathbf{r} (t) \right)} dW^{(1)}_{x,t} \nonumber \\
& & + \sqrt{D_{2}\left( \mathbf{r} (t) \right)} dW^{(2)}_{x,t} + \sqrt{D_{\rm dip}^{(x)}\left( \mathbf{r} (t) \right)} dW^{(\mathrm{dip})}_{x,t}, \nonumber \\ && \\
d v_{y}(t) & = & \frac{1}{m} F_{y} \left( \mathbf{r} (t) \right) dt + \sqrt{D_{1}\left( \mathbf{r} (t) \right)} dW^{(1)}_{y,t} \nonumber \\
& & + \sqrt{D_{2}\left( \mathbf{r} (t) \right)} dW^{(2)}_{y,t} + \sqrt{D_{\rm dip}^{(y)}\left( \mathbf{r} (t) \right)} dW^{(\mathrm{dip})}_{y,t}, \nonumber \\ && \\
d v_{z}(t) & = & \frac{1}{m} \left[ F_{z} \left( \mathbf{r} (t) \right) + F_{\rm push} \left( \mathbf{r} (t) \right) \right] dt \nonumber \\
& & + \sqrt{D_{1}\left( \mathbf{r} (t) \right)} dW^{(1)}_{z,t} + \sqrt{D_{2}\left( \mathbf{r} (t) \right)} dW^{(2)}_{z,t} \nonumber \\
&& + \sqrt{D_{\rm dip}^{(z)}\left( \mathbf{r} (t) \right)} dW^{(\mathrm{dip})}_{z,t}. 
\end{eqnarray}
\end{subequations}

We solve this system using a Monte-Carlo method to simulate atom trajectories along the guide and employ the Runge-Kutta method \cite{sauer2012} to perform the numerical integration of the system.

\subsubsection{\label{sec:sim_guide_loading}Initial conditions and loading of the guide}

We first describe the initial conditions of the simulation and use the simulation to analyze the loading efficiency -- that is, the probability for an atom to enter the guide -- and the velocity distribution of the loaded atoms.

The atom is initially in the $| F=1 \rangle$ state and located along the $y$ axis at a distance of 3~mm from the guide axis. Since the atoms in the MOT are far from degeneracy, the initial velocity $v_{0}$ is distributed according to a Maxwell-Boltzmann law
\begin{equation}
f \left( v \right) = \sqrt{\frac{2}{\pi}} \frac{v^{2}}{\delta v^{3}} \exp \left(- \frac{v^{2}}{2 \delta v^{2}} \right),
\label{eq:maxwell-boltzmann}
\end{equation}
for a temperature $T_{0} = 50 \; \mu$K corresponding to a velocity dispersion $\delta v = \sqrt{k_{B} T_{0}/m} \sim 7$~cm/s, and the initial velocity vector is $\mathbf{v}_{0} = (v_{0},0,0)$. Moreover, the Doppler frequency shift ($\Delta \nu_{D} = -v/\lambda$) is included in the simulation.

During the propagation of the atom towards the guide, we estimate at each step of the simulation the probability for the atom to be pumped into the stable state $|F=2, m_{F} = +2 \rangle$ and then randomly determine whether the atom is pumped or not. More precisely, we use the knowledge of the scattering rate at each position $\Gamma_{S}^{(1)} (\mathbf{r})$ and the fact that on average 2.4 photons must be scattered to pump the atom. We consider that an atom is loaded inside the guide when it is pumped at less than 0.5~mm from the guide center (which corresponds to the guide radius defined from the position of the barrier maximum). From the simulation of $10^{4}$ stochastic trajectories, we find that about 21\% of the atoms are loaded into the guide.

After selecting the trajectories that lead to loaded atoms, we analyze the velocity distribution of the atoms before their further transport along the guide. We define the loaded velocity as that when the atom first reaches the plane $y=0$. From a large number of loading simulations we estimate the velocity distribution and the result is presented in Fig.~\ref {fig:loading_velocities}. The resulting distribution is well described by the initial Maxwell-Boltzmann distribution [Eq.~(\ref{eq:maxwell-boltzmann})] for $T_{0}=50 \; \mu$K shifted by 2.3~cm/s, and the mean velocity of the loaded atoms is 13.4~cm/s. This deterministic increase of kinetic energy has two contributions: first the $| F=1 \rangle$ potential is slightly attractive, and secondly the atom rolls down the barrier potential after being pumped into $| F=2 \rangle$. However, it is interesting to note that the velocity dispersion of the loaded atoms remains that of the atoms in the MOT.

\begin{figure}[!h]
\begin{center}
\includegraphics[width=7.5cm,keepaspectratio]{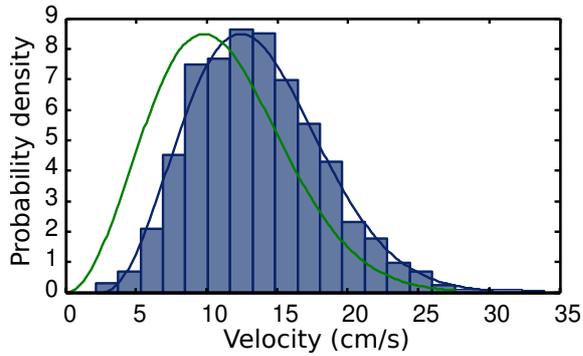}
\caption{(Color online) Velocity distribution of the loaded atoms. The histogram is the distribution obtained from the simulation of 2132 Monte-Carlo trajectories, the solid green line is the initial Maxwell-Boltzmann distribution with $T_{0} =50 \; \mu$K and the solid blue line is the initial distribution shifted by 2.3~cm/s.}
\label{fig:loading_velocities}
\end{center}
\end{figure}

\subsubsection{Transport along the guide}

We now analyze the transport of the atoms along the guide. An example of a simulated trajectory is presented in Fig.~\ref{fig:trajectory}. We see that the atom is guided along the beam axis by reflections off the potential barrier. During the reflections, the transverse velocities reverse while the longitudinal velocity increases due to the higher scattering rate near the guide barrier. This demonstrates the effect of the pushing force (discussed in Sec.~\ref{sec:transport}), which is nondeterministic since it depends on the stochastic atomic trajectory. 

\begin{figure}[!h]
\begin{center}
\includegraphics[width=8.5cm,keepaspectratio]{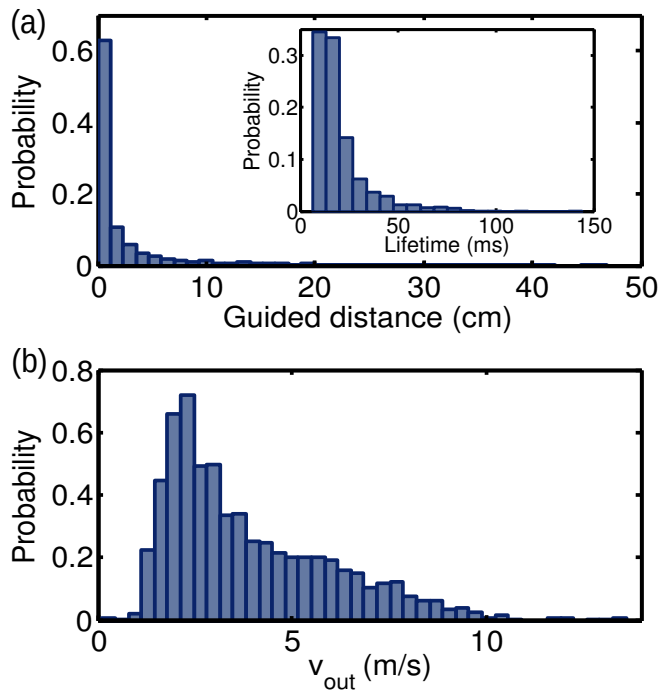}
\caption{(Color online) Statistical properties obtained from the simulation of 2132 atomic trajectories. (a) Probability for an atom to be guided over a given distance. Inset: The distribution for the lifetime of an atom in the guide. (b) Longitudinal velocity distribution at a guided distance of 5~cm.}
\label{fig:stats_sim_num}
\end{center}
\end{figure}

Simulating a large number of atomic trajectories allows us to estimate the guide performance. The distribution of the guided distances -- an atom is lost when it is further than 0.5~mm from the guide axis -- shows that a significant fraction of the loaded atoms ($\sim$~16~\%) have been guided over at least 5~cm [Fig.~\ref{fig:stats_sim_num}~(a)]. Moreover, from the velocity distribution at the guided distance of 5~cm [Fig.~\ref{fig:stats_sim_num}~(b)] we determine that the mean velocity is 3.9~m/s with a dispersion of 2.1~m/s, corresponding to a temperature of about 25~mK. Note that transverse cooling could be added \cite{aghajani-talesh2010}: either to maintain a given propagation distance while reducing the longitudinal velocity, or to increase the propagation distance for a given velocity. A better control of the longitudinal velocity could be achieved by retropropagating the beams at 780~nm and 1529~nm and precisely tuning the power ratios between the forward and backward directions; in that case, a grating would be generated and sub-Doppler cooling mechanisms could be exploited.

 
\section{Conclusion}

We proposed and theoretically analyzed an all-optical method to produce a continuous source of cold atoms in a spin-polarizing guide. The resulting system is a combination of several physical effects, and notably an atom-diode effect based on light-shift engineering in a three-level atom. 

To engineer the system, we modeled the dipole forces in a doubly-driven $\Xi$-system and provided a deep description of the diffusion processes responsible for the heating of the atomic sample. These results constitute a general theoretical framework which can be used to design other atom control methods based on $\Xi$-systems.

We finally presented and studied in detail a possible implementation with $^{87}$Rb. In particular, we isolated a closed three-level transition which can be driven with reliable optical sources based on diode laser and telecommunication technologies. Since the guide consists in overlapping two optical beams, the experimental implementation should be compact and simple to operate which are interesting features for embedded applications. We showed with numerical simulations that a guiding distance of several centimeters is achievable and compatible with slow velocities of a few meters per second.

\section*{Acknowledgments}

We thank S.~Palacios and N.~Martinez de Escobar for comments. This work was supported by the Spanish MINECO project MAGO (FIS2011-23520), by the European Research Council project AQUMET and by Fundació Privada Cellex.

\appendix

\section{Optical Bloch equations and steady state solution}

\subsection{\label{app:opt_bloch_eqs}Optical Bloch equations}

In the $\left\{ \left| 1 \right\rangle,\left| 2 \right\rangle,\left| 3 \right\rangle \right\}$ basis, the density operator can be written as
\begin{equation}
\rho = \left( 
\begin{array}{ccc}
\rho_{11}     & \rho_{12}     & \rho_{13} \\
\rho_{12}^{*} & \rho_{22}     & \rho_{23} \\
\rho_{13}^{*} & \rho_{23}^{*} & \rho_{33}
\end{array}
\right),
\end{equation}
and Eq.~(\ref{eq:master_eq}) provides the optical Bloch equations for a $\Xi$-system:
\begin{subequations}
\begin{eqnarray}
\partial_{t} \rho_{11} & = & i \frac{\Omega_{1}}{2} \left( \rho_{12} - \rho_{12}^{*} \right) + \Gamma_{1} \rho_{22}, \\
\partial_{t} \rho_{22} & = & -i \frac{\Omega_{1}}{2} \left( \rho_{12} - \rho_{12}^{*} \right) + i \frac{\Omega_{2}}{2} \left( \rho_{23} - \rho_{23}^{*} \right) \nonumber \\
& & \;\;\;\;\; + \Gamma_{2} \rho_{33} - \Gamma_{1} \rho_{22}, \\
\partial_{t} \rho_{33} & = & -i \frac{\Omega_{2}}{2} \left( \rho_{23} - \rho_{23}^{*} \right) - \Gamma_{2} \rho_{33}, \\
\partial_{t} \rho_{12} & = & \left( - \frac{\Gamma_{1}}{2} + i \Delta_{1} \right) \rho_{12} - i \frac{\Omega_{1}}{2} \left( \rho_{22} - \rho_{11} \right) \nonumber \\
& & \;\;\;\;\; + i \frac{\Omega_{2}}{2} \rho_{13}, \\
\partial_{t} \rho_{13} & = & \left( - \frac{\Gamma_{2}}{2} + i \Delta_{2} \right) \rho_{13} + i \frac{\Omega_{2}}{2} \rho_{12} - i \frac{\Omega_{1}}{2} \rho_{23},  \nonumber \\ && \\
\partial_{t} \rho_{23} & = & \left[ - \frac{1}{2} \left( \Gamma_{1} + \Gamma_{2} \right) + i \left( \Delta_{2} - \Delta_{1} \right) \right] \rho_{23} \nonumber \\
& & \;\;\;\;\; - i \frac{\Omega_{2}}{2} \left( \rho_{33} - \rho_{22} \right) -i \frac{\Omega_{1}}{2} \rho_{13}.
\end{eqnarray}
\end{subequations}

\subsection{\label{app:steady_rho}Steady state of the density matrix}

The steady state solution of the optical Bloch equations is obtained for $\partial_{t} \rho_{\infty} = 0$. Since $\mathrm{Tr} (\rho) = 1$, one has the following relation between the populations: $\rho_{11}+\rho_{22}+\rho_{33}=1$. Moreover, $\rho_{ij}-\rho_{ij}^{*}=2i \mathrm{Im} \rho_{ij}$ and the steady state thus obeys the following system of equations:
\begin{subequations}
\begin{eqnarray}
&- \Omega_{1} \mathrm{Im} \rho_{12} + \Gamma_{1} \rho_{22} = 0, \label{eq:1} \\
&\Omega_{2} \mathrm{Im} \rho_{23} - \Gamma_{2} \rho_{33} = 0, \label{eq:2} \\
&\left( - \frac{\Gamma_{1}}{2} + i \Delta_{1} \right) \rho_{12} - i \frac{\Omega_{1}}{2} \left( 2\rho_{22} + \rho_{33}-1 \right) \nonumber \\ 
& \;\;\;\; + i \frac{\Omega_{2}}{2} \rho_{13} = 0, \label{eq:3} \\
&\left( - \frac{\Gamma_{2}}{2} + i \Delta_{2} \right) \rho_{13} + i \frac{\Omega_{2}}{2} \rho_{12} - i \frac{\Omega_{1}}{2} \rho_{23} = 0, \label{eq:4} \\
&\left[ - \frac{1}{2} \left( \Gamma_{1} + \Gamma_{2} \right) + i \left( \Delta_{2} - \Delta_{1} \right) \right] \rho_{23} \nonumber \\
& \;\;\;\; - i \frac{\Omega_{2}}{2} \left( \rho_{33} - \rho_{22} \right) -i \frac{\Omega_{1}}{2} \rho_{13} = 0. \label{eq:5}
\end{eqnarray}
\end{subequations}
Note that the second equation has been removed since it can be obtained from a linear combination of the first and the third equations.

To solve this system, we split the density matrix components into real and imaginary parts: $\rho_{ij}=\rho_{ij}^{R}+i\rho_{ij}^{I}$. From Eqs.~(\ref{eq:1}) and (\ref{eq:2}) we obtain the relationship between the coherences and the populations in $\left| 2 \right\rangle$ and $\left| 3 \right\rangle$:
\begin{subequations}
\label{eq:populations_to_imag}
\begin{eqnarray}
\rho_{22} &=& \frac{\Omega_{1}}{\Gamma_{1}} \rho_{12}^{I}, \\
\rho_{33} &=& \frac{\Omega_{2}}{\Gamma_{2}} \rho_{23}^{I}.
\end{eqnarray}
\end{subequations}
The real parts of Eqs.~(\ref{eq:3}), (\ref{eq:4}) and (\ref{eq:5}) provide the following relations between the real and imaginary parts of the coherence terms:
\begin{equation}
\left( 
\begin{array}{c}
\rho_{12}^{R} \\
\rho_{13}^{R} \\
\rho_{23}^{R}
\end{array}
\right) = 
\left( 
\begin{array}{ccc}
-2\Delta_{1}/\Gamma_{1} & -\Omega_{2}/\Gamma_{1} & 0 \\
-\Omega_{2}/\Gamma_{2} & -2\Delta_{2}/\Gamma_{2} & \Omega_{1}/\Gamma_{2} \\
0 & \frac{\Omega_{1}}{\Gamma_{1}+\Gamma_{2}} & - 2 \frac{\Delta_{2}-\Delta_{1}}{\Gamma_{1}+\Gamma_{2}}
\end{array}
\right)
\left( 
\begin{array}{c}
\rho_{12}^{I} \\
\rho_{13}^{I} \\
\rho_{23}^{I}
\end{array}
\right).
\label{eq:rho_imag_to_rho_real}
\end{equation}
From this relation and the imaginary parts of Eqs.~(\ref{eq:3}), (\ref{eq:4}), and (\ref{eq:5}), we show that the imaginary parts of the coherences are solutions of the following system:
\begin{equation}
\mathbf{M}\left( 
\begin{array}{c}
\rho_{12}^{I} \\
\rho_{13}^{I} \\
\rho_{23}^{I}
\end{array}
\right)
= \left( 
\begin{array}{ccc}
\Gamma_{1} \Omega_{1} \\
0 \\
0
\end{array}
\right), \\
\label{eq:syst_rho_imag}
\end{equation}
where
\begin{widetext}
\begin{equation}
\mathbf{M} =
\left( 
\begin{array}{ccc}
\Gamma_{1}^{2} + 4 \Delta_{1}^{2} + 2 \Omega_{1}^{2} + \frac{\Gamma_{1}}{\Gamma_{2}} \Omega_{2}^{2} & 2 \Omega_{2} \left( \Delta_{1} + \frac{\Gamma_{1}}{\Gamma_{2}} \Delta_{2} \right) & 0 \\
2 \Omega_{2} \left( \Delta_{2} + \frac{\Gamma_{2}}{\Gamma_{1}} \Delta_{1} \right) &  \Gamma_{2}^{2} + 4 \Delta_{2}^{2} + \frac{\Gamma_{2}}{\Gamma_{1}} \Omega_{2}^{2} + \frac{\Gamma_{2}}{\Gamma_{1}+\Gamma_{2}} \Omega_{1}^{2} & -2 \frac{\Gamma_{2}}{\Gamma_{1}+\Gamma_{2}} \Omega_{1} \left( \Delta_{2} - \Delta_{1} \right)  \\
\Omega_{1} \Omega_{2} \left( 1 + \frac{\Gamma_{2}}{\Gamma_{1}} \right) & 2 \Omega_{1} \left( \Delta_{2} + \frac{\Gamma_{2}}{\Gamma_{1}+\Gamma_{2}} \left( \Delta_{2}-\Delta_{1} \right) \right) & - \Omega_{1}^{2} - \Omega_{2}^{2} -4 \frac{\Gamma_{2}}{\Gamma_{1}+\Gamma_{2}} \left( \Delta_{2} - \Delta_{1} \right)^{2} - \Gamma_{2} \left( \Gamma_{1}+\Gamma_{2} \right)
\end{array}
\right).
\end{equation}
\end{widetext}
As a consequence, the determination of the inverse matrix $\mathbf{M}^{-1}$ allows us to solve the system (\ref{eq:syst_rho_imag}), and thus to fully determine the steady state density matrix $\rho_{\infty}$ using the relations (\ref{eq:rho_imag_to_rho_real}) and (\ref{eq:populations_to_imag}).

\section{Calculation of the diffusion coefficient of the dipole force fluctuation}

Here we calculate the diffusion coefficient associated with the dipole force fluctuation. Using the master equation in the dressed state basis we show that, in the secular approximation, the dressed state populations obey rate equations. This means that the stochastic internal state evolution is a continuous time Markov process. We then derive a general expression for the diffusion coefficient associated with a Markov process. This result is finally applied in the case of the dipole force diffusion.

\subsection{\label{app:trans_prob}Evolution of the dressed states populations}

The master equation (\ref{eq:master_eq}) can be written for the operators in the dressed state basis as
\begin{equation}
\partial_{t} \widetilde{\rho} = - \frac{i}{\hbar} \left[ \widetilde{V},\widetilde{\rho} \right] + \Gamma_{1} \mathcal{L} \left[ \widetilde{\sigma}_{12} \right] \widetilde{\rho} + \Gamma_{2} \mathcal{L} \left[ \widetilde{\sigma}_{23} \right] \widetilde{\rho}.
\label{eq:master_eq_dress_state}
\end{equation}
Following \cite{Cohen1992}, we project Eq.~(\ref{eq:master_eq_dress_state}) on a given dressed state $\left| i \right\rangle$. Since $\widetilde{V} \left| i \right\rangle = E_{i}\left| i \right\rangle$, one has $\left\langle i \right| [ \widetilde{V},\widetilde{\rho} ] \left| i \right\rangle = 0$, the only contribution to the population evolution of the dressed states is thus provided by the projection of the Lindblad terms:
\begin{equation}
\left\langle i \right| \mathcal{L} \left[ \widetilde{\sigma} \right] \widetilde{\rho} \left| i \right\rangle = \left\langle i \right| \widetilde{\sigma} \widetilde{\rho} \widetilde{\sigma}^{\dagger} \left| i \right\rangle - \frac{1}{2} \left\langle i \right| \widetilde{\sigma} \widetilde{\sigma}^{\dagger} \widetilde{\rho} \left| i \right\rangle - \frac{1}{2} \left\langle i \right| \widetilde{\rho} \widetilde{\sigma} \widetilde{\sigma}^{\dagger} \left| i \right\rangle.
\end{equation}
From the decomposition of the density matrix in the dressed states basis,
\begin{equation}
\widetilde{\rho} = \sum_{j,l} \widetilde{\rho}_{jl} \left| j \right\rangle \left\langle l \right|,
\end{equation}
we obtain
\begin{equation}
\left\langle i \right| \widetilde{\sigma} \widetilde{\rho} \widetilde{\sigma}^{\dagger} \left| i \right\rangle = \sum_{j,l} \widetilde{\rho}_{jl} \left\langle i \right| \widetilde{\sigma} \left| j \right\rangle \left\langle l \right| \widetilde{\sigma}^{\dagger} \left| i \right\rangle.
\end{equation}
We now assume that we are in the limit $\Omega_{1} \gg \Gamma_{1}$  and $\Omega_{2} \gg \Gamma_{2}$, in which case the coherences evolve fast compared to the populations. We can thus perform the secular approximation where the coherences are replaced by their mean value -- that is, for $j \neq l$, $\left\langle \widetilde{\rho}_{jl} \right\rangle \sim 0$ -- and thus obtain
\begin{equation}
\left\langle i \right| \widetilde{\sigma} \widetilde{\rho} \widetilde{\sigma}^{\dagger} \left| i \right\rangle = \sum_{j} \widetilde{\rho}_{jj} \left| \left( \widetilde{\sigma} \right)_{i j} \right|^{2}.
\end{equation}
Decomposing again the density operator in the dressed state basis, we have
\begin{eqnarray}
\left\langle i \right| \widetilde{\sigma} \widetilde{\sigma}^{\dagger} \widetilde{\rho} \left| i \right\rangle & = & \sum_{j, l} \widetilde{\rho}_{jl} \left\langle i \right| \widetilde{\sigma} \widetilde{\sigma}^{\dagger} \left| j \right\rangle \left\langle l \right| \left. i \right\rangle \\
& = & \sum_{j} \widetilde{\rho}_{j i} \left\langle i \right| \widetilde{\sigma} \widetilde{\sigma}^{\dagger} \left| j \right\rangle \\
& \sim & \widetilde{\rho}_{i i} \left\langle i \right| \widetilde{\sigma} \widetilde{\sigma}^{\dagger} \left| i \right\rangle
\end{eqnarray}
where the last line is obtained using the secular approximation again. From the resolution of unity $\sum_{j} \left| j \right\rangle \left\langle j \right| = \mathds{1}$, we finally obtain
\begin{eqnarray}
\left\langle i \right| \widetilde{\sigma} \widetilde{\sigma}^{\dagger} \widetilde{\rho} \left| i \right\rangle & = & \widetilde{\rho}_{i i} \left\langle i \right| \widetilde{\sigma} \left( \sum_{j} \left| j \right\rangle \left\langle j \right| \right) \widetilde{\sigma}^{\dagger} \left| i \right\rangle \\
& = & \widetilde{\rho}_{i i} \sum_{j} \left| \left( \widetilde{\sigma} \right)_{i j} \right|^{2}.
\end{eqnarray}

Combining these results according to the master equation (\ref{eq:master_eq_dress_state}), the evolution of the populations finally reduces to the following rate equations:
\begin{equation}
\partial_{t} \widetilde{\rho}_{ii} = - \left( \sum_{j} \Gamma_{i \rightarrow j} \right) \widetilde{\rho}_{ii} + \sum_{j} \Gamma_{i \rightarrow j} \widetilde{\rho}_{jj},
\label{eq:pop_eq_sec_app}
\end{equation}
where the transition rates are:
\begin{equation}
\Gamma_{i \rightarrow j} = \Gamma_{1} \left| \left( \widetilde{\sigma}_{12} \right)_{ij} \right|^{2} + \Gamma_{2} \left| \left( \widetilde{\sigma}_{23} \right)_{ij} \right|^{2}.
\end{equation}

Introducing the dressed states populations vector $\widetilde{\boldsymbol\rho} = \left( \widetilde{\rho}_{\alpha \alpha}, \widetilde{\rho}_{\beta \beta}, \widetilde{\rho}_{\gamma \gamma} \right)^{T}$, the population equations (\ref{eq:pop_eq_sec_app}) can be written as
\begin{equation}
\partial_{t} \widetilde{\boldsymbol\rho} + \boldsymbol\Gamma \widetilde{\boldsymbol\rho} = \mathbf{0},
\end{equation}
where
\begin{equation}
\boldsymbol\Gamma = \left( 
\begin{smallmatrix}
\Gamma_{\alpha \rightarrow \beta} + \Gamma_{\alpha \rightarrow \gamma} & -\Gamma_{\alpha \rightarrow \beta} & -\Gamma_{\alpha \rightarrow \gamma} \\
-\Gamma_{\beta \rightarrow \alpha} & \Gamma_{\beta \rightarrow \alpha}+\Gamma_{\beta \rightarrow \gamma} & -\Gamma_{\beta \rightarrow \gamma} \\
-\Gamma_{\gamma \rightarrow \alpha} & -\Gamma_{\gamma \rightarrow \beta} & \Gamma_{\gamma \rightarrow \alpha}+\Gamma_{\gamma \rightarrow \beta}
\end{smallmatrix}
\right).
\end{equation}
It is important to note that $\det \boldsymbol\Gamma = 0$, which means that $\boldsymbol\Gamma$ is singular.

The populations thus evolve according to:
\begin{equation}
\widetilde{\boldsymbol\rho} (t) =  e^{-\boldsymbol\Gamma t} \widetilde{\boldsymbol\rho} (0).
\label{eq:res_pop_ev}
\end{equation}

This relation allows us to determine the transition probability $\mathbf{P}_{ij} \left( \tau \right) = P \left( j,\tau | i,0 \right)$, that is, the probability to be in the state $\left| j \right\rangle$ at time $t=\tau$ given that it was in state $\left| i \right\rangle$ at time $t=0$. In other words, this is the population $\widetilde{\rho}_{jj} (\tau)$ given that initially $\widetilde{\rho}_{ii} (0) = 1$. The relation (\ref{eq:res_pop_ev}) thus provides the transition probability matrix
\begin{equation}
\mathbf{P} \left( \tau \right) = e^{-\boldsymbol\Gamma \tau}.
\end{equation}
The internal state evolution is a homogeneous continuous time Markov process associated with the transition matrix $\mathbf{P} \left( \tau \right)$. The Markov process is characterized by $\boldsymbol\Gamma$ which is the infinitesimal generator of this process: it is the generator of the semi-group $\left\{ \mathbf{P} \left( \tau \right), \tau \geq 0 \right\}$ and  forward time translation is mapped onto this semi-group though the Chapman-Kolmogorov relation $\forall \tau_{2} \geq \tau_{1}, \; \mathbf{P} \left( \tau_{1}+\tau_{2} \right) = \mathbf{P} \left( \tau_{1} \right) \mathbf{P} \left( \tau_{2} \right)$.

\subsection{\label{app:dip_force_fluc}Diffusion coefficient of the dipole force fluctuation}

We have shown that the stochastic evolution of the internal state is a homogeneous continuous time Markov process with generator $\mathbf{\Gamma}$. Since for each dressed state $\left| k \right\rangle$ there corresponds a given dipole force $\widetilde{\mathbf{F}}_{k} = - \nabla \widetilde{E}_{k}$, the stochastic dipole force evolution $\{ \widetilde{\mathbf{F}}_{t}, t \geq 0 \}$ is also a homogeneous continuous time Markov process with generator $\mathbf{\Gamma}$. The transition graph associated with this Markov chain is depicted in Fig.~\ref{fig:markov_diagram}.

\begin{figure}[!h]
\begin{center}
\includegraphics[width=6.5cm,keepaspectratio]{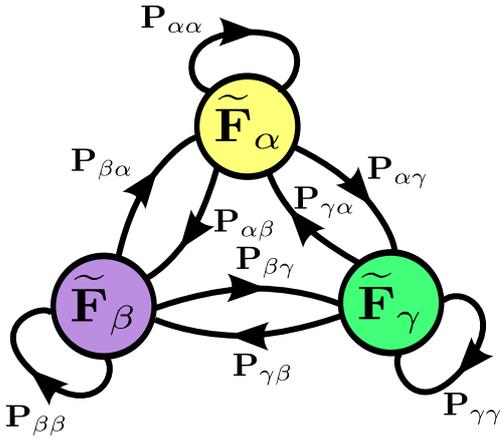}
\caption{(Color online) Transition graph of the Markov chain that follows the dipole force in the dressed states basis.}
\label{fig:markov_diagram}
\end{center}
\end{figure} 

Since there is a unique stationary population distribution $\widetilde{\boldsymbol\rho}^{\infty}$, the Prop.~\ref{prop:D_X} in App.~\ref{ann:diff_coef_markov_proc} gives the diffusion coefficient for the momentum dispersion associated with the stochastic evolution of the dipole force,
\begin{equation}
D_{\rm dip} = \frac{1}{m^{2}} \left\langle \widetilde{\mathbf{F}}, \boldsymbol\Gamma^{\sharp} \widetilde{\mathbf{F}} \right\rangle,
\end{equation}
where $\widetilde{\mathbf{F}} = ( \widetilde{\mathbf{F}}_{\alpha},\widetilde{\mathbf{F}}_{\beta},\widetilde{\mathbf{F}}_{\gamma} )^{T}$, $\mathbf{\Gamma}^{\sharp}$ is the group inverse of $\mathbf{\Gamma}$, and the scalar product is defined as $\left\langle \mathbf{x}, \mathbf{y} \right\rangle = \sum_{i} \widetilde{\rho}^{\infty}_{ii} x_{i} y_{i}$.

\subsection{\label{ann:diff_coef_markov_proc}Diffusion coefficient of a Markov process}

Generalizing the definition of the diffusion coefficient for the dipole force [Eq.~(\ref{eq:diff_coeff_def})], we define here the diffusion coefficient associated with a homogeneous continuous time Markov chain (HCTMC) and show that it can be efficiently calculated using the group inverse of the infinitesimal generator of the process.

A HCTMC $\left\{ X_{t}, t \geq 0 \right\}$ is characterized by its transition matrix $\mathbf{P} \left( \tau \right)$ with elements $\mathbf{P}_{ij} \left( \tau \right) = P \left( X_{\tau} = x_{j} | X_{0} = x_{i} \right)$. It is a stochastic matrix: $\sum_{i} \mathbf{P}_{ij} \left( \tau \right) = 1$, which obeys the Kolmogorov forward equation
\begin{equation}
\frac{d}{d\tau} \mathbf{P} \left( \tau \right) + \mathbf{G} \mathbf{P} \left( \tau \right) = \mathbf{0},
\end{equation}
where $\mathbf{G}$ is called the infinitesimal generator of the HCTMC. Since the initial condition $\mathbf{P} (0) = \mathds{1}$ must be verified, the solution of this differential equation is $\mathbf{P} \left( \tau \right) = \exp \left( - \mathbf{G} \tau \right)$. The generator satisfies $\sum_{i} \mathbf{G}_{ij}=0$ and all the off-diagonal elements must be negative.

The analysis of the long-term behavior of the Markov process makes use of the stationary probability vector $\boldsymbol\Pi^{\infty}$ which is the probability to be in a given state of the chain after an infinite time. It satisfies $\boldsymbol\Pi^{\infty} \mathbf{G} = \mathbf{0}$ with $\| \boldsymbol\Pi^{\infty} \|_{1} = 1$. To quantify the fluctuations of the stochastic variable around the stationary distribution we define the diffusion coefficient:

\begin{definition}
Let $X = \left\{ X_{t}, t \geq 0 \right\}$ be an irreducible HCTMC over the countable state space $\mathcal{S} = \left\{ x_{i} \right\}$ with transition matrix $\mathbf{P} \left( \tau \right)$. If a stationary distribution $\boldsymbol\Pi^{\infty}$ exists, then the \textit{diffusion coefficient} of $X$ is defined as
\begin{equation}
D_{X} = \int_{0}^{\infty} \left( \left\langle X_{\infty} X_{\tau} \right\rangle - \left\langle X_{\infty} \right\rangle^{2} \right) \; d\tau,
\label{eq:diff_coeff_def2}
\end{equation}
where
\begin{eqnarray}
\left\langle X_{\infty} X_{\tau} \right\rangle & = & \sum_{i,j\in \mathcal{S}} x_{i} x_{j} \Pi^{\infty}_{i} \mathbf{P}_{ij} \left( \tau \right), \\
\left\langle X_{\infty} \right\rangle & = & \sum_{i\in \mathcal{S}} x_{i} \Pi^{\infty}_{i}.
\end{eqnarray}
\end{definition}


The transition matrix elements must satisfy $\lim_{\tau \rightarrow \infty} \mathbf{P}_{ij} (\tau) < \infty$, which means that any eigenvalue of $\mathbf{G}$ must be positive. Moreover, if all the eigenvalues are strictly positive then $\lim_{\tau \rightarrow \infty} \mathbf{P} (\tau) = \mathbf{0}$ which is not a stochastic matrix. As a consequence the generator must have at least one eigenvalue equal to zero, meaning that $\mathbf{G}$ is \textit{singular}.

These properties -- which are a consequence of the Perron-Frobenius theorem -- allow us to relate the diffusion coefficient to the generator of the Markov chain. More specifically, the diffusion coefficient is the \textit{quadratic form} involving the states of the chain and associated with the group inverse (a special kind of pseudoinverse) of $\mathbf{G}$. This result is particularly useful for the numerical calculation of the diffusion coefficient.

\begin{proposition}
Let $X=\left\{ X_{t}, t \geq 0 \right\}$ be an irreducible HCTMC over the countable state space $\mathcal{S} = \left\{ x_{i} \right\}$ which has a stationary distribution $\boldsymbol\Pi^{\infty}$. If $\mathbf{G}$ is the infinitesimal generator of $X$, then the diffusion coefficient is
\begin{equation}
D_{X} = \left\langle \mathbf{x}, \mathbf{G}^{\sharp} \mathbf{x} \right\rangle,
\label{eq:diff_coeff_prop}
\end{equation}
where $\mathbf{x} = \left( x_{i} \right)$ is the column vector of the Markov chain states, $\mathbf{G}^{\sharp}$ is the group inverse of $\mathbf{G}$, and the scalar product is defined as $\left\langle \mathbf{x}, \mathbf{y} \right\rangle = \sum_{i\in \mathcal{S}} \Pi_{i}^{\infty} x_{i} y_{i}$.
\label{prop:D_X}
\end{proposition}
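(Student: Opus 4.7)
The plan is to rewrite the definition of $D_X$ in matrix form, identify the limit $\mathbf{P}_\infty = \lim_{\tau\to\infty}\mathbf{P}(\tau)$ as the projector onto $\ker\mathbf{G}$, and then evaluate the resulting operator integral directly using the spectral structure of $\mathbf{G}$. The main obstacle is making sense of the divergent piece of $\int_0^\infty e^{-\mathbf{G}\tau}\,d\tau$ and showing cleanly that the subtraction of $\langle X_\infty\rangle^2$ regularises it to precisely $\mathbf{G}^\sharp$.

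First I would rewrite the two-time correlator as a quadratic form. With the scalar product defined in the proposition,
\begin{equation*}
\langle X_\infty X_\tau\rangle = \sum_{i,j}\Pi_i^\infty x_i \mathbf{P}_{ij}(\tau) x_j = \langle \mathbf{x},\mathbf{P}(\tau)\mathbf{x}\rangle,
\end{equation*}
while $\langle X_\infty\rangle^2 = \bigl(\sum_i \Pi_i^\infty x_i\bigr)^2 = \langle \mathbf{x},\mathbf{P}_\infty\mathbf{x}\rangle$, since by irreducibility $\mathbf{P}_\infty$ is the rank-one matrix $(\mathbf{P}_\infty)_{ij}=\Pi_j^\infty$. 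Pulling $\mathbf{x}$ out of the integral yields
\begin{equation*}
D_X = \Bigl\langle \mathbf{x},\int_0^\infty\bigl(\mathbf{P}(\tau)-\mathbf{P}_\infty\bigr)d\tau\;\mathbf{x}\Bigr\rangle,
\end{equation*}
which reduces the proposition to the operator identity $\int_0^\infty(\mathbf{P}(\tau)-\mathbf{P}_\infty)\,d\tau = \mathbf{G}^\sharp$.

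To establish that identity I would use the Perron--Frobenius facts quoted in the paper: $0$ is a simple eigenvalue of $\mathbf{G}$ and all other eigenvalues have strictly positive real part, so the decomposition $V=\ker\mathbf{G}\oplus\mathrm{range}\,\mathbf{G}$ is $\mathbf{G}$-invariant and $\mathbf{P}_\infty$ is the spectral projector onto $\ker\mathbf{G}$ along $\mathrm{range}\,\mathbf{G}$, satisfying $\mathbf{G}\mathbf{P}_\infty=\mathbf{P}_\infty\mathbf{G}=0$ and commuting with $e^{-\mathbf{G}\tau}$. The group inverse can then be characterised as the operator that agrees with $\mathbf{G}^{-1}$ on $\mathrm{range}\,\mathbf{G}$ and vanishes on $\ker\mathbf{G}$, so $\mathbf{G}\mathbf{G}^\sharp = \mathbf{G}^\sharp\mathbf{G} = \mathbf{I}-\mathbf{P}_\infty$ and $\mathbf{G}^\sharp\mathbf{P}_\infty=0$. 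Splitting $e^{-\mathbf{G}\tau} = \mathbf{P}_\infty + e^{-\mathbf{G}\tau}(\mathbf{I}-\mathbf{P}_\infty)$ kills the constant piece in the integrand:
\begin{equation*}
\mathbf{P}(\tau)-\mathbf{P}_\infty = e^{-\mathbf{G}\tau}(\mathbf{I}-\mathbf{P}_\infty).
\end{equation*}
On the invariant subspace $\mathrm{range}\,\mathbf{G}$ the exponential decays, so an antiderivative is $-\mathbf{G}^\sharp e^{-\mathbf{G}\tau}(\mathbf{I}-\mathbf{P}_\infty)$ and direct evaluation at the endpoints gives $\int_0^\infty(\mathbf{P}(\tau)-\mathbf{P}_\infty)\,d\tau = \mathbf{G}^\sharp(\mathbf{I}-\mathbf{P}_\infty) = \mathbf{G}^\sharp$.

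Substituting back yields the claimed formula $D_X = \langle \mathbf{x},\mathbf{G}^\sharp\mathbf{x}\rangle$. The technical subtlety worth highlighting, and the step I expect to require the most care, is the regularity argument at $\tau\to\infty$: one needs that the exponential convergence $e^{-\mathbf{G}\tau}|_{\mathrm{range}\,\mathbf{G}}\to 0$ is fast enough to justify exchanging integral and limit, which follows from the fact that all nonzero eigenvalues of $\mathbf{G}$ have strictly positive real part (a consequence of irreducibility plus the existence of a unique stationary distribution). Once that is in hand the remaining manipulations are purely algebraic and rely only on the three defining identities of the group inverse.
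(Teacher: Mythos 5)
Your proof is correct, and its first half --- rewriting the correlator as $\langle \mathbf{x},(\mathbf{P}(\tau)-\mathbf{P}_\infty)\mathbf{x}\rangle$ using $\lim_{\tau\to\infty}\mathbf{P}_{ij}(\tau)=\Pi_j^\infty$, thereby reducing everything to the operator identity $\int_0^\infty(\mathbf{P}(\tau)-\mathbf{P}_\infty)\,d\tau=\mathbf{G}^\sharp$ --- is exactly the paper's argument. Where you diverge is in how you establish that identity. The paper diagonalizes: it writes $\mathbf{G}=\mathbf{V}[\mathbf{0}\oplus\boldsymbol\Delta]\mathbf{V}^{-1}$, computes the integral block by block to get $\mathbf{V}[\mathbf{0}\oplus\boldsymbol\Delta^{-1}]\mathbf{V}^{-1}$, and then checks the three defining axioms of the group inverse to conclude $\mathbf{Z}=\mathbf{G}^\sharp$ by uniqueness. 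You instead work coordinate-free: you identify $\mathbf{P}_\infty$ as the spectral projector onto $\ker\mathbf{G}$ along $\mathrm{range}\,\mathbf{G}$, use the characterization $\mathbf{G}\mathbf{G}^\sharp=\mathbf{I}-\mathbf{P}_\infty$, $\mathbf{G}^\sharp\mathbf{P}_\infty=\mathbf{0}$, and integrate via the explicit antiderivative $-\mathbf{G}^\sharp e^{-\mathbf{G}\tau}(\mathbf{I}-\mathbf{P}_\infty)$. Your route is slightly more robust --- it only needs the zero eigenvalue to be semisimple (which is what makes the group inverse exist in the first place) and handles nonzero eigenvalues with positive real part rather than assuming them real and the generator diagonalizable, as the paper's lemma implicitly does --- at the cost of invoking the abstract characterization of $\mathbf{G}^\sharp$ rather than deriving it. You are also right that the only analytic content is the decay of $e^{-\mathbf{G}\tau}$ restricted to $\mathrm{range}\,\mathbf{G}$, which both proofs ultimately rest on.
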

\begin{proof}
Since $\left\langle X_{\infty} \right\rangle^{2} = \sum_{i,j\in \mathcal{S}} x_{i} x_{j} \Pi_{i}^{\infty} \Pi_{j}^{\infty}$, it follows that
\begin{equation}
\left\langle X_{\infty} X_{\tau} \right\rangle - \left\langle X_{\infty} \right\rangle^{2} = \sum_{i,j\in \mathcal{S}} x_{i} x_{j} \Pi_{i}^{\infty} \left( \mathbf{P}_{ij} (\tau) - \Pi_{j}^{\infty} \right).
\end{equation}
Moreover, $X$ is irreducible and has a stationary distribution, and therefore the theorem of convergence to invariant distribution for a continuous time Markov chain is satisfied, and $\mathbf{P}_{ij} (\infty) \equiv \lim_{\tau \rightarrow \infty} \mathbf{P}_{ij} (\tau) = \Pi_{j}^{\infty}$, which results in
\begin{equation}
D_{X} = \sum_{i,j\in \mathcal{S}} x_{i} x_{j} \Pi_{i}^{\infty} \mathbf{Z}_{ij}=\langle \mathbf{x}, \mathbf{Z} \mathbf{x} \rangle,
\end{equation}
where $\mathbf{Z} = \int_{0}^{\infty} \left[ \mathbf{P} (\tau) - \mathbf{P} (\infty) \right] d\tau$. Finally, Lemma~\ref{lem:lemma} provides the relation between the fundamental matrix and the generator of the chain: $\mathbf{Z}=\mathbf{G}^{\sharp}$. As a result the diffusion coefficient is $D_{X} = \langle \mathbf{x}, \mathbf{G}^{\sharp} \mathbf{x} \rangle$.
\end{proof}


The calculation of the diffusion coefficient results from the calculation of the so-called \textit{fundamental matrix} or \textit{deviation matrix} $\mathbf{Z}$ \cite{so62339,Ham2004}, which is related to the expected time spent in state $j$ starting from $i$. It is equal to the group inverse of the Markov chain generator.

\begin{lemma}
\begin{equation}
\mathbf{Z} \equiv \int_{0}^{\infty} \left[ \mathbf{P} (\tau) - \mathbf{P} (\infty) \right] \; d\tau = \mathbf{G}^{\sharp}.
\end{equation}
\label{lem:lemma}
\end{lemma}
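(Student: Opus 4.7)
The plan is to verify the three defining properties of the group inverse for the matrix $\mathbf{Z}$: namely (i) $\mathbf{G}\mathbf{Z}\mathbf{G}=\mathbf{G}$, (ii) $\mathbf{Z}\mathbf{G}\mathbf{Z}=\mathbf{Z}$, and (iii) $\mathbf{G}\mathbf{Z}=\mathbf{Z}\mathbf{G}$. Uniqueness of the group inverse (when $0$ is a semisimple eigenvalue of $\mathbf{G}$, which is the situation here) then yields the identification $\mathbf{Z}=\mathbf{G}^{\sharp}$.

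First I would analyze the integrand. Because the irreducible continuous-time Markov chain admits a stationary distribution, the Perron--Frobenius input already invoked in the discussion preceding the lemma implies that the eigenvalue $0$ of $\mathbf{G}$ is simple and all remaining eigenvalues have strictly positive real parts. Writing the Jordan decomposition of $\mathbf{G}$, the spectral projector $\Pi_{0}$ onto $\ker\mathbf{G}$ along $\operatorname{range}\mathbf{G}$ satisfies $e^{-\mathbf{G}\tau}\Pi_{0}=\Pi_{0}$ while $e^{-\mathbf{G}\tau}(I-\Pi_{0})\to 0$ exponentially fast as $\tau\to\infty$. Hence $\mathbf{P}(\infty)=\Pi_{0}$ exists and coincides with the spectral projector, and the improper integral defining $\mathbf{Z}$ is absolutely convergent.

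Next I would compute $\mathbf{G}\mathbf{Z}$ using $\mathbf{G}e^{-\mathbf{G}\tau}=-\frac{d}{d\tau}e^{-\mathbf{G}\tau}$ together with $\mathbf{G}\Pi_{0}=0$:
\begin{equation}
\mathbf{G}\mathbf{Z}=\int_{0}^{\infty}\mathbf{G}e^{-\mathbf{G}\tau}\,d\tau=\bigl[-e^{-\mathbf{G}\tau}\bigr]_{0}^{\infty}=I-\Pi_{0}.
\end{equation}
Since $\Pi_{0}$ is a polynomial in $\mathbf{G}$ (spectral projector), it commutes with $\mathbf{G}$ and with $e^{-\mathbf{G}\tau}$, so an identical computation yields $\mathbf{Z}\mathbf{G}=I-\Pi_{0}$, establishing property (iii). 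Property (i) is then immediate: $\mathbf{G}\mathbf{Z}\mathbf{G}=(I-\Pi_{0})\mathbf{G}=\mathbf{G}$ because $\Pi_{0}\mathbf{G}=0$. For property (ii), the identity $\Pi_{0}e^{-\mathbf{G}\tau}=\Pi_{0}$ gives $\Pi_{0}\mathbf{Z}=\int_{0}^{\infty}(\Pi_{0}-\Pi_{0})\,d\tau=\mathbf{0}$, hence $\mathbf{Z}\mathbf{G}\mathbf{Z}=(I-\Pi_{0})\mathbf{Z}=\mathbf{Z}$.

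The hard part is not the algebra, which is short once $\mathbf{P}(\infty)$ is identified with the spectral projector onto $\ker\mathbf{G}$. The main subtlety lies precisely in this spectral input: one must invoke (or re-derive) that for an irreducible continuous-time Markov generator on a finite state space, zero is a semisimple (in fact simple) eigenvalue and every other eigenvalue has strictly positive real part. This is what guarantees both the absolute convergence of the integral defining $\mathbf{Z}$ and the fact that the group inverse exists at all; without it, neither the limit $\mathbf{P}(\infty)$ nor the manipulations above are well defined.
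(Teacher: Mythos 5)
Your proof is correct and shares the paper's overall strategy --- identify $\mathbf{Z}$ as a matrix satisfying the three group-inverse axioms and conclude by uniqueness --- but the computational core is genuinely different. The paper explicitly diagonalizes $\mathbf{G} = \mathbf{V}\left[\mathbf{0}\oplus\boldsymbol\Delta\right]\mathbf{V}^{-1}$, integrates $e^{-\boldsymbol\Delta\tau}$ block by block to get $\mathbf{Z} = \mathbf{V}\left[\mathbf{0}\oplus\boldsymbol\Delta^{-1}\right]\mathbf{V}^{-1}$, and then checks the axioms in that basis. You instead work coordinate-free: you identify $\mathbf{P}(\infty)$ with the spectral projector $\Pi_{0}$ onto $\ker\mathbf{G}$ and use $\mathbf{G}e^{-\mathbf{G}\tau} = -\frac{d}{d\tau}e^{-\mathbf{G}\tau}$ to evaluate $\mathbf{G}\mathbf{Z} = \mathbf{Z}\mathbf{G} = I - \Pi_{0}$ by the fundamental theorem of calculus, after which the three axioms drop out from $\Pi_{0}\mathbf{G} = \mathbf{0}$ and $\Pi_{0}\mathbf{Z} = \mathbf{0}$. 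Your route buys two things: it avoids assuming $\mathbf{G}$ is diagonalizable (the paper implicitly does, and a generic generator need not be), needing only that $0$ is a semisimple eigenvalue and the remaining spectrum lies in the open right half-plane; and it correctly accommodates complex eigenvalues, where the paper writes $\lambda_{i} > 0$ as though the spectrum were real. The paper's explicit decomposition, on the other hand, makes the worked two-level example that follows the lemma immediate, since $\mathbf{V}$, $\boldsymbol\Delta$, and $\mathbf{G}^{\sharp}$ are produced in closed form. Both arguments correctly rest on the same Perron--Frobenius input, which you rightly flag as the essential hypothesis guaranteeing convergence of the integral and existence of the group inverse.
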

\begin{proof}
By the Perron-Frobenius theorem, the generator admits the following eigendecomposition:
\begin{equation}
\mathbf{G} = \mathbf{V} \left[ \mathbf{0} \oplus \boldsymbol\Delta \right] \mathbf{V}^{-1},
\end{equation}
with $\boldsymbol\Delta = \mathrm{diag} \left( \lambda_{1}, \dots, \lambda_{n} \right)$, $n$ being the number of nonzero eigenvalues and $\lambda_{i}>0$. It results in
\begin{equation}
\mathbf{P} (\tau) = e^{-\mathbf{G}\tau} = \mathbf{V} \left[ \mathds{1} \oplus e^{-\boldsymbol\Delta \tau} \right] \mathbf{V}^{-1},
\end{equation}
and since $\lambda_{i}>0$,
\begin{equation}
\mathbf{P} (\infty) = \lim_{\tau \rightarrow \infty} \mathbf{V} \left[ \mathds{1} \oplus e^{-\boldsymbol\Delta \tau} \right] \mathbf{V}^{-1} = \mathbf{V} \left[ \mathds{1} \oplus \mathbf{0} \right] \mathbf{V}^{-1}.
\end{equation}
The deviation from the stationary probability is thus
\begin{equation}
\mathbf{P} (\tau) - \mathbf{P} (\infty) = \mathbf{V} \left[ \mathbf{0} \oplus e^{-\boldsymbol\Delta \tau} \right] \mathbf{V}^{-1}.
\end{equation}
We can now evaluate the integral
\begin{eqnarray}
\mathbf{Z}_{ij} & = & \int_{0}^{\infty} \left( \mathbf{V} \left[ \mathbf{0} \oplus e^{-\boldsymbol\Delta \tau} \right] \mathbf{V}^{-1} \right)_{ij} \; d\tau \\
& = & \int_{0}^{\infty} \sum_{ll'} \mathbf{V}_{il} \left[ \mathbf{0} \oplus e^{-\boldsymbol\Delta \tau} \right]_{ll'} \mathbf{V}^{-1}_{l'j} \; d\tau \\
& = & \sum_{ll'} \mathbf{V}_{il} \left[ \mathbf{0} \oplus \int_{0}^{\infty} e^{-\boldsymbol\Delta \tau} \; d\tau \right]_{ll'} \mathbf{V}^{-1}_{l'j}.
\end{eqnarray}
Using again the fact that $\lambda_{i}>0$, it follows that
\begin{equation}
\int_{0}^{\infty} e^{-\boldsymbol\Delta \tau} \; d\tau = \boldsymbol\Delta^{-1} = \mathrm{diag} \left(\lambda_{1}^{-1},\dots,\lambda_{n}^{-1} \right),
\end{equation}
and finally
\begin{equation}
\mathbf{Z}_{ij} = \sum_{ll'} \mathbf{V}_{il} \left[ \mathbf{0} \oplus \boldsymbol\Delta^{-1} \right]_{ll'} \mathbf{V}^{-1}_{l'j} =  \left( \mathbf{V} \left[ \mathbf{0} \oplus \boldsymbol\Delta^{-1} \right] \mathbf{V}^{-1} \right)_{ij}.
\end{equation}
Writing $\boldsymbol\Sigma = \mathbf{0} \oplus \boldsymbol\Delta$ and $\boldsymbol\Sigma^{\sharp} = \mathbf{0} \oplus \boldsymbol\Delta^{-1}$ we have $\mathbf{G} = \mathbf{V} \boldsymbol\Sigma \mathbf{V}^{-1}$ and $\mathbf{Z} = \mathbf{V} \boldsymbol\Sigma^{\sharp} \mathbf{V}^{-1}$. Clearly $\boldsymbol\Sigma \boldsymbol\Sigma^{\sharp} \boldsymbol\Sigma = \boldsymbol\Sigma$, $\boldsymbol\Sigma^{\sharp} \boldsymbol\Sigma \boldsymbol\Sigma^{\sharp} = \boldsymbol\Sigma^{\sharp}$ and $\boldsymbol\Sigma^{\sharp} \boldsymbol\Sigma = \boldsymbol\Sigma \boldsymbol\Sigma^{\sharp}$, therefore
\begin{eqnarray}
\mathbf{G} \mathbf{Z} \mathbf{G} = \mathbf{V} \boldsymbol\Sigma \boldsymbol\Sigma^{\sharp} \boldsymbol\Sigma \mathbf{V}^{-1} = \mathbf{V} \boldsymbol\Sigma  \mathbf{V}^{-1} = \mathbf{G}, \\
\mathbf{Z} \mathbf{G} \mathbf{Z} = \mathbf{V} \boldsymbol\Sigma^{\sharp} \boldsymbol\Sigma \boldsymbol\Sigma^{\sharp} \mathbf{V}^{-1} = \mathbf{V} \boldsymbol\Sigma^{\sharp}  \mathbf{V}^{-1} = \mathbf{Z}, \\
\mathbf{G} \mathbf{Z} = \mathbf{V} \boldsymbol\Sigma \boldsymbol\Sigma^{\sharp}  \mathbf{V}^{-1} = \mathbf{V} \boldsymbol\Sigma^{\sharp} \boldsymbol\Sigma  \mathbf{V}^{-1} = \mathbf{Z} \mathbf{G}.
\end{eqnarray}
In other words, $\mathbf{Z}$ has all the properties of the group inverse of $\mathbf{G}$, and a matrix that satisfies this properties is unique: $\mathbf{Z} = \mathbf{G}^{\sharp}$.
\end{proof}


As an example, we can use this result to calculate the diffusion coefficient associated with the dipole force fluctuation in a two-level atom. 
\begin{example}
In the case of the dipolar force on a two-level system, one as $\mathcal{S} = \left\{ +F, -F \right\}$ where $F = \hbar \nabla \Omega /2$ and the transitions between the dressed states are described by the following generator \cite{dalibard1985}:
\begin{equation}
\mathbf{G} = \Gamma \left( 
\begin{array}{cc}
\cos^{4} \theta  & -\cos^{4} \theta \\
-\sin^{4} \theta & \sin^{4} \theta
\end{array}
\right),
\end{equation}
where the angle $\theta$ satisfies $\tan 2 \theta = \Omega / \Delta$. The stationary population vector is 
\begin{equation}
\boldsymbol\Pi^{\infty} = \frac{\Gamma}{\Gamma_{\rm pop}} \left(
\begin{array}{c}
\sin^{4} \theta \\
\cos^{4} \theta
\end{array}
\right), 
\end{equation}
where $\Gamma_{\rm pop} = \Gamma \left( \cos^{4} \theta + \sin^{4} \theta \right)$. The generator has the following eigendecomposition: $\mathbf{G} = \mathbf{V} \boldsymbol\Delta \mathbf{V}^{-1}$ with $\boldsymbol\Delta = \mathrm{diag} \left( 0, \Gamma_{\rm pop} \right)$ and
\begin{equation}
\mathbf{V} = \left( 
\begin{array}{cc}
1 & -\cot^{4} \theta \\
1 & 1
\end{array}
\right), \; \mathbf{V}^{-1} = \frac{\Gamma}{\Gamma_{\rm pop}} \left( 
\begin{array}{cc}
\sin^{4} \theta & \cos^{4} \theta \\
-\sin^{4} \theta & \sin^{4} \theta
\end{array}
\right).
\end{equation}
The group inverse of the generator is thus
\begin{equation}
\mathbf{G}^{\sharp} = \frac{\Gamma}{\Gamma_{\rm pop}^{2}} \left( 
\begin{array}{cc}
\cos^{4} \theta & -\cos^{4} \theta \\
-\sin^{4} \theta & \sin^{4} \theta
\end{array}
\right).
\end{equation}
The diffusion coefficient is finally obtained from Eq.~(\ref{eq:diff_coeff_prop}) with $\mathbf{x} = (+F,-F)^{T}$:
\begin{equation}
D_{\rm dip} = \left\langle \mathbf{x}, \mathbf{G}^{\sharp} \mathbf{x} \right\rangle = \frac{4 F^{2}}{\Gamma} \frac{\sin^{4} \theta \cos^{4} \theta}{\left( \sin^{4} \theta + \cos^{4} \theta \right)^{3}}.
\label{eq:diff_coeff_tls}
\end{equation}
In the high-intensity limit ($\Omega \gg \Delta$), the angle is $\theta = \pi/4$ and the diffusion coefficient becomes $D_{\rm dip} = \hbar^{2} |\nabla \Omega|^{2} / 2 \Gamma$. Conversely, in the low-intensity regime ($\Omega \ll \Delta$) the angle is small ($\theta \sim 0$) and the diffusion coefficient vanishes.
\end{example}

\end{document}